\documentclass{llncs}
\pagestyle{plain}

\usepackage[utf8x]{inputenc}
\usepackage{amsmath,amssymb}
\usepackage{tikz}
\usepackage{tikz-qtree}
\usetikzlibrary{shapes.geometric,shapes.misc}
\usetikzlibrary{positioning,calc}
\usepackage{algorithmicx}
\usepackage[noend]{algpseudocode}
\usepackage{hyperref}
\usepackage{subcaption}
\usepackage{enumitem}  




\def\NP{\mathbf{NP}}
\def\root{{\textsc{root}}}

\def\rank{{\textsc{rank}}}
\def\parent{{\textsc{parent}}}
\def\merge{{\textsc{merge}}}
\def\compress{\textsc{collapse}}
\def\children{{\textsc{children}}}
\def\push{{\textsc{push}}}

\def\levelone#1{\children(#1)}
\def\Nat{{\mathbb{N}_0}}

\begin{document}
\title{Recognizing Union-Find trees built up using union-by-rank strategy is $\NP$-complete\thanks{Research was supported by the NKFI grant no.~108448.}}
\author{Kitti~Gelle,~Szabolcs~Iv\'an}
\institute{Department of Computer Science, University of Szeged, Hungary\\ \email{\{kgelle,szabivan\}@inf.u-szeged.hu}}
\maketitle
\begin{abstract}
	Disjoint-Set forests, consisting of Union-Find trees, are data structures having a widespread practical application due to their efficiency.
	Despite them being well-known, no exact structural characterization of these trees is known (such a characterization exists for 
	Union trees which are constructed without using path compression) for the case assuming
	union-by-rank strategy for merging.
	In this paper we provide such a characterization by means of a simple $\push$ operation
	and show that the decision problem whether a given tree (along with the rank info of its nodes)
	is a Union-Find tree is $\NP$-complete,
	complementing our earlier similar result for the union-by-size strategy.
\end{abstract}
\section{Introduction}
Disjoint-Set forests, introduced in~\cite{Galler:1964:IEA:364099.364331},
are fundamental data structures in many practical algorithms where one has to maintain a partition of some set,
which supports three operations: \emph{creating} a partition consisting of singletons, \emph{querying} whether two given elements are
in the same class of the partition (or equivalently: \emph{finding} a representative of a class, given an element of it)
and \emph{merging} two classes.
Practical examples include e.g.
building a minimum-cost spanning tree of a weighted graph~\cite{Cormen:2001:IA:580470},
unification algorithms~\cite{Knight:1989:UMS:62029.62030}
etc.

To support these operations, even a linked list representation suffices but to achieve an almost-constant amortized time cost per operation, Disjoint-Set forests are used in practice. In this data structure, sets are represented as directed trees with the edges directed towards the root; the $\textsc{create}$ operation creates $n$ trees having one node each (here $n$ stands for the number of the elements in the universe), the
$\textsc{find}$ operation takes a node and returns the root of the tree in which the
node is present (thus the $\textsc{same-class}(x,y)$ operation is implemented as
$\textsc{find}(x)==\textsc{find}(y)$), and the $\textsc{merge}(x,y)$ operation
is implemented by merging the trees containing $x$ and $y$, i.e. making one of the root nodes to be a child of the other root node (if the two nodes are in different classes).

In order to achieve near-constant efficiency, one has to keep the (average) height of the trees small.
There are two ``orthogonal'' methods to do that: first, during the merge operation it is advisable to attach the ``smaller'' tree below the ``larger''
one. If the ``size'' of a tree is the number of its nodes, we say the trees are built up according to the \emph{union-by-size} strategy,
if it's the depth of a tree, then we talk about the \emph{union-by-rank} strategy. Second, during a $\textsc{find}$ operation invoked on some node $x$
of a tree, one can apply the \emph{path compression} method, which reattaches each ancestor of $x$ directly to the root of the tree in which they are
present. If one applies both the path compression method and either one of the union-by-size or union-by-rank strategies, then any sequence of
$m$ operations on a universe of $n$ elements has worst-case time cost $O(m\alpha(n))$ where $\alpha$ is the inverse of the extremely fast growing
(not primitive recursive) Ackermann function for which $\alpha(n)\leq 5$ for each practical value of $n$ (say, below $2^{65535}$), hence it has
an amortized almost-constant time cost~\cite{Tarjan:1975:EGB:321879.321884}. Since it's proven~\cite{Fredman:1989:CPC:73007.73040} that \emph{any} data structure maintaining a partition has worst-case time cost $\Omega(m\alpha(n))$, the Disjoint-Set forests equipped with a strategy and path compression offer a theoretically optimal data structure which performs exceptionally well also in practice.
For more details see standard textbooks on data structures, e.g.~\cite{Cormen:2001:IA:580470}.

Due to these facts, it is certainly interesting both from the theoretical as well as the practical point of view to characterize those trees
that can arise from a forest of singletons after a number of merge and find operations, which we call Union-Find trees in this paper.
One could e.g. test Disjoint-Set implementations since if at any given point of execution a tree of a Disjoint-Set forest is not a valid
Union-Find tree, then it is certain that there is a bug in the implementation of the data structure (though we note at this point that this
data structure is sometimes regarded as one of the ``primitive'' data structures, in the sense that it is possible to implement a correct
version of them that needs not be certifying~\cite{DBLP:journals/csr/McConnellMNS11}). Nevertheless, only the characterization
of Union trees is known up till now~\cite{DBLP:journals/ipl/Cai93}, i.e. which correspond to the case when one uses one of the union-by- strategies but
\emph{not} path compression. Since in that case the data structure offers only a theoretic bound of $\Theta(\log n)$ on the amortized time cost,
in practice all implementations imbue path compression as well, so for a characterization to be really useful, it has to cover this case
as well.

In this paper we show that the recognition problem of Union-Find trees is $\NP$-complete when the union-by-rank
strategy is used, complementing our earlier results~\cite{DBLP:journals/corr/GelleI15} where
we proved $\NP$-completeness for the union-by-size strategy.
The proof method applied here resembles to that one, but the low-level details for the reduction
(here we use the $\textsc{Partition}$ problem,
there we used the more restricted version $3-\textsc{Partition}$ as this is a very canonical
strongly $\NP$-complete problem) differ greatly.
This result also confirms the statement
from~\cite{DBLP:journals/ipl/Cai93} that the problem ``seems to be much harder''
than recognizing Union trees. As (up to our knowledge) in most of the actual software libraries
having this data structure implemented the union-by-rank strategy is used (apart from the cases
when one quickly has to query the size of the sets as well), for software testing purposes the
current result is more relevant than the one applying union-by-size strategy.

{\bf Related work.} There is an increasing interest in determining the complexity of the recognition problem of various
data structures. The problem was considered for suffix trees~\cite{I2014316,Starikovskaya201514},
(parametrized) border arrays~\cite{I20116959,Lu2002,Duval:2005:BAB:1131983.1131987,MR2544434,MR2894365},
suffix arrays~\cite{Bannai2003208,Duval2002249,Kucherov2013915},
KMP tables~\cite{Duval2009281,Gawrychowski2014337},
prefix tables~\cite{DBLP:conf/stacs/ClementCR09}, 
cover arrays~\cite{Crochemore2010251}, and directed acyclic word- and subsequence graphs~\cite{Bannai2003208}.

\section{Notation}
A \emph{(ranked) tree} is a tuple $t=(V_t,\root_t,\rank_t,\parent_t)$
with $V_t$ being the finite set of its \emph{nodes},
$\root_t\in V_t$ its \emph{root},
$\rank_t:V_t\to\Nat$ mapping a nonnegative integer to each node,
and $\parent_t:(V_t-\{\root_t\})\to V_t$ mapping each nonroot node to its parent
(so that the graph of $\parent_t$ is a directed acyclic graph, with edges being directed towards the root).
We require $\rank_t(x)<\rank_t(\parent_t(x))$ for each nonroot node $x$, i.e. the rank strictly decreases towards the leaves.

For a tree $t$ and a node $x\in V_t$, let $\children(t,x)$ stand for the set $\{y\in V_t:\parent_t(y)=x\}$ of its children
and $\levelone{t}$ stand as a shorthand for $\children(t,\root_t)$, the set of depth-one nodes of $t$.
Also, let $x\preceq_t y$ denote that $x$ is a (non-strict) \emph{ancestor} of $y$ in $t$, i.e. $x=\parent_t^k(y)$ for some $k\geq 0$.
For $x\in V_t$, let $t|_x$ stand for the \emph{subtree} $(V_x=\{y\in V:x\preceq_t y\},x,\rank_t|_{V_x},\parent_t|_{V_x})$ of $t$ rooted at $x$.
As shorthand, let $\rank(t)$ stand for $\rank_t(\root_t)$, the rank of the root of $t$.

Two operations on trees are that of \emph{merging} and \emph{collapsing}.
Given two trees $t=(V_t,\root_t,\rank_t,\parent_t)$ and $s=(V_s,\root_s,\rank_s,\parent_s)$
with $V_t$ and $V_s$ being disjoint and $\rank(t)\geq\rank(s)$, then
their \emph{merge} $\textsc{merge}(t,s)$ (in this order) is the tree
$(V_t\cup V_s,\root_t,\rank,\parent)$ with
$\parent(x)=\parent_t(x)$ for $x\in V_t$, $\parent(\root_s)=\root_t$ and $\parent(y)=\parent_s(y)$ for each nonroot node $y\in V_s$ of $s$,
and
\[\rank(\root_t)=\begin{cases}\rank(t)&\textrm{if }\rank(s)<\rank(t),\\\rank(t)+1&\textrm{otherwise,}\end{cases}\]
and $\rank(x)=\rank_t(x)$, $\rank_s(x)$ resp. for each $x\in V_t-\{\root_r\}$, $x\in V_s$ resp.

Given a tree $t=(V,\root,\rank,\parent)$ and a node $x\in V$, then $\compress(t,x)$ is the tree $(V,\root,\rank,\parent')$
with $\parent'(y)=\root$ if $y$ is a nonroot ancestor of $x$ in $t$ and $\parent'(y)=\parent(y)$ otherwise.
For examples, see Figure~\ref{fig-merge-collapse-push}.

Observe that both operations indeed construct a ranked tree (e.g. the rank remains strictly decreasing towards the leaves).

\begin{figure*}[h]
	\begin{subfigure}[c]{0.5\textwidth}
		\centering
		\begin{tikzpicture}[thick]
		\node at (1,1) {$s$:};
		\node[circle] (r) at (2.5,1) [draw,inner sep=1.5mm] {};
		\node[right of=r, node distance=4mm] {$r$};
		\node[left of=r, node distance=0mm] {$2$};
		\foreach \i in {1,...,4}{
			\node[circle] (c\i) at (\i,0) [draw,inner sep=1.5mm] {};
			\path (r) edge (c\i);
			\node[left of=c\i, node distance=0mm]{$1$};
		}
		\foreach \i in {0,...,3}{
			\node[circle] (cc\i) at (\i/2+1.25,-1) [draw,inner sep=1.5mm] {};
			\node[left of=cc\i, node distance=0mm]{$0$};
		}
		\path (cc0) edge (c2);
		\path (cc1) edge (c2);
		\path (cc2) edge (c2);
		\path (cc3) edge (c2);
		\node[right of=c2, node distance=4mm] {$x$};
		\node[below of=cc2, node distance=4mm] {$z$};
		\node[circle] (r2) at (6,1) [draw,inner sep=1.5mm] {};
		\node[left of=r2, node distance=10mm]  {$t$:};
		
		\foreach \i in {0,...,1}{
			\node[circle] (c\i) at (\i+5.5,0) [draw,inner sep=1.5mm] {};
			\path (r2) edge (c\i);
			\node[left of=c\i, node distance=0mm]{$0$};
		}
		\node[left of=r2, node distance=0mm] {$2$};
		\node[right of=r2, node distance=5mm] {$y$};
		\end{tikzpicture}
		\caption{Trees $s$ and $t$}
	\end{subfigure}
	\begin{subfigure}[c]{0.5\textwidth}
		\centering
		\begin{tikzpicture}[thick]
		\node[circle] (r) at (2.5,1) [draw,inner sep=1.5mm] {};
		\node[right of=r, node distance=4mm] {$r$};
		\node[left of=r, node distance=0mm] {$3$};
		\foreach \i in {1,...,4}{
			\node[circle] (c\i) at (\i,0) [draw,inner sep=1.5mm] {};
			\path (r) edge (c\i);
			\node[left of=c\i, node distance=0mm]{$1$};
		}
		\foreach \i in {0,...,3}{
			\node[circle] (cc\i) at (\i/2+1.25,-1) [draw,inner sep=1.5mm] {};
			\node[left of=cc\i, node distance=0mm]{$0$};
		}
		\path (cc0) edge (c2);
		\path (cc1) edge (c2);
		\path (cc2) edge (c2);
		\path (cc3) edge (c2);
		\node[right of=c2, node distance=4mm] {$x$};
		\node[below of=cc2, node distance=4mm] {$z$};
		\node[circle] (r2) at (5,0) [draw,inner sep=1.5mm] {};
		\path (r) edge (r2);
		
		\foreach \i in {0,...,1}{
			\node[circle] (c\i) at (\i+4.5,-1) [draw,inner sep=1.5mm] {};
			\path (r2) edge (c\i);
			\node[left of=c\i, node distance=0mm]{$0$};
		}
		\node[left of=r2, node distance=0mm] {$2$};
		\node[right of=r2, node distance=5mm] {$y$};
		\end{tikzpicture}
		\caption{$t'=\merge(s,t)$}
	\end{subfigure}

	\begin{subfigure}[c]{1\textwidth}
	\centering
	\begin{tikzpicture}[thick]
	\node[circle] (r) at (2.5,1) [draw,inner sep=1.5mm] {};
	\node[right of=r, node distance=4mm] {$r$};
	\node[left of=r, node distance=0mm] {$3$};
	\foreach \i in {1,3,4}{
		\node[circle] (c\i) at (\i,0) [draw,inner sep=1.5mm] {};
		\path (r) edge (c\i);
		\node[left of=c\i, node distance=0mm]{$1$};
	}
	\node[circle] (c2) at (6.5,-1) [draw,inner sep=1.5mm] {};
	\node[left of=c2, node distance=0mm]{$1$};
	\foreach \i in {0,...,3}{
		\node[circle] (cc\i) at (\i/2+5.75,-2) [draw,inner sep=1.5mm] {};
		\node[left of=cc\i, node distance=0mm]{$0$};
		\path (cc\i) edge (c2);
	}
	\node[right of=c2, node distance=4mm] {$x$};
	\node[below of=cc2, node distance=4mm] {$z$};
	\node[circle] (r2) at (5,0) [draw,inner sep=1.5mm] {};
	\path (r) edge (r2);
	\path (r2) edge (c2);
	\foreach \i in {0,...,1}{
		\node[circle] (ccc\i) at (\i+4.5,-1) [draw,inner sep=1.5mm] {};
		\path (r2) edge (ccc\i);
		\node[left of=ccc\i, node distance=0mm]{$0$};
	}
	\node[left of=r2, node distance=0mm] {$2$};
	\node[right of=r2, node distance=5mm] {$y$};
	\end{tikzpicture}
	\caption{$t''=\push(t',x,y)$}
\end{subfigure}

	\begin{subfigure}[c]{1\textwidth}
	\centering
	\begin{tikzpicture}[thick]
	\node[circle] (r) at (2.5,1) [draw,inner sep=1.5mm] {};
	\node[left of=r, node distance=4mm] {$r$};
	\node[left of=r, node distance=0mm] {$3$};
	\foreach \i in {1,3,4}{
		\node[circle] (c\i) at (\i,0) [draw,inner sep=1.5mm] {};
		\path (r) edge (c\i);
		\node[left of=c\i, node distance=0mm]{$1$};
	}
	\node[circle] (c2) at (7,0) [draw,inner sep=1.5mm] {};
	\node[left of=c2, node distance=0mm]{$1$};
	\foreach \i in {0,1,3}{
		\node[circle] (cc\i) at (\i/2+6.25,-1) [draw,inner sep=1.5mm] {};
		\node[left of=cc\i, node distance=0mm]{$0$};
		\path (cc\i) edge (c2);
	}
	\node[circle] (cc2) at (8.25,0) [draw,inner sep=1.5mm] {};
	\node[left of=cc2, node distance=0mm]{$0$};
	\path (cc2) edge (r);
	\node[right of=c2, node distance=4mm] {$x$};
	\node[right of=cc2, node distance=4mm] {$z$};
	\node[circle] (r2) at (5,0) [draw,inner sep=1.5mm] {};
	\path (r) edge (r2);
	\path (r) edge (c2);
	\foreach \i in {0,...,1}{
		\node[circle] (ccc\i) at (\i+4.5,-1) [draw,inner sep=1.5mm] {};
		\path (r2) edge (ccc\i);
		\node[left of=ccc\i, node distance=0mm]{$0$};
	}
	\node[left of=r2, node distance=0mm] {$2$};
	\node[right of=r2, node distance=5mm] {$y$};
	\end{tikzpicture}	
		\caption{$t'''=\compress(t'',z)$}
	\end{subfigure}
	\caption{Merge, collapse and push.}
	\label{fig-merge-collapse-push}
\end{figure*}
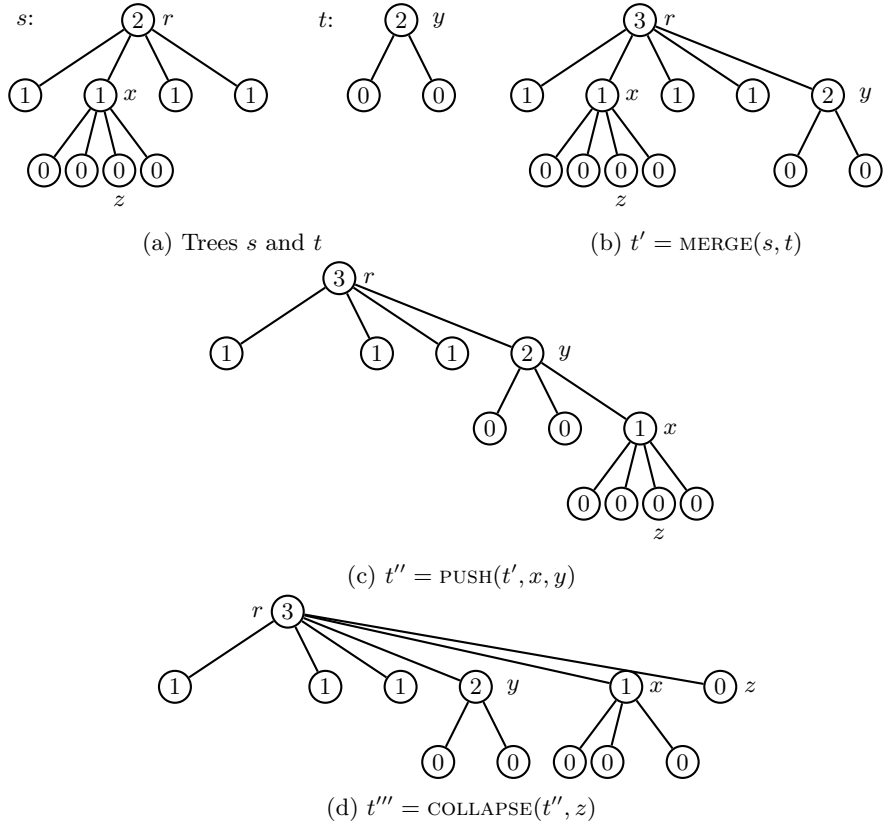

We say that a tree is a \emph{singleton} tree if it has exactly one node, and this node has rank $0$.

The class of Union trees is the least class of trees satisfying the following two conditions:
every singleton tree is a Union tree, and if $t$ and $s$ are Union trees with $\rank(t)\geq\rank(s)$,
then $\textsc{merge}(t,s)$ is a Union tree as well.

Analogously, the class of Union-Find trees is the least class of trees satisfying the following three conditions:
every singleton tree is a Union-Find tree, if $t$ and $s$ are Union-Find trees with $\rank(t)\geq\rank(s)$, then
$\textsc{merge}(t,s)$ is a Union-Find tree as well, and if $t$ is a Union-Find tree and $x\in V_t$ is a node of $t$,
then $\compress(t,x)$ is also a Union-Find tree.

We say that a node $x$ of a tree $t$ \emph{satisfies the Union condition} if
\[\{\rank_t(y):y\in\children(t,x)\}=\{0,1,\ldots,\rank_t(x)-1\}.\]

Then, the characterization of Union trees from~\cite{DBLP:journals/ipl/Cai93} can be formulated in our terms as follows:
\begin{theorem}
\label{thm-union}
A tree $t$ is a Union tree if and only if each node of $t$ satisfies the Union condition.
\end{theorem}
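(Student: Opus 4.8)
The plan is to prove both directions of the characterization by induction, exploiting the recursive definition of Union trees.

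\textbf{The forward direction} (every Union tree satisfies the Union condition at each node) I would prove by structural induction on the definition of Union trees. The base case of a singleton tree is vacuous: the root has rank $0$, so the required set $\{0,\ldots,\rank_t(x)-1\}$ is empty and equals the (empty) set of children ranks. For the inductive step, suppose $t'=\merge(t,s)$ where $t,s$ are Union trees with $\rank(t)\geq\rank(s)$, and assume every node of $t$ and of $s$ satisfies the Union condition. I would check the condition at each node of $t'$ by cases. For every node other than $\root_t$, its set of children and their ranks are inherited unchanged from $t$ or from $s$, so the condition carries over directly from the induction hypothesis. The only interesting node is $\root_t$, which gains $\root_s$ as a new child. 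Here I would split on whether $\rank(s)<\rank(t)$ or $\rank(s)=\rank(t)$. In the first case the root's rank is unchanged at $\rank(t)$, and the newly added child $\root_s$ has rank $\rank(s)\in\{0,\ldots,\rank(t)-1\}$, a value already present among the children of $\root_t$ in $t$ (since by IH those ranks were exactly $\{0,\ldots,\rank(t)-1\}$), so the set of children ranks is unchanged and still equals $\{0,\ldots,\rank(t)-1\}$. In the second case the root's rank becomes $\rank(t)+1$, and $\root_s$ contributes rank $\rank(t)$, which is exactly the one new value needed to extend $\{0,\ldots,\rank(t)-1\}$ to $\{0,\ldots,\rank(t)\}$.

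\textbf{The converse} (any tree in which every node satisfies the Union condition is a Union tree) requires building the tree from singletons via merges, and I would argue by induction on the number of nodes of $t$. If $t$ is a singleton it is a Union tree by definition. Otherwise, the key idea is to identify a child of the root that can be ``peeled off'' as the second argument of a final merge. Let $r=\root_t$ and $k=\rank_t(r)$; since $r$ satisfies the Union condition with $k\geq 1$, it has a child $c$ of rank $k-1$. I would let $s=t|_c$ be the subtree rooted at $c$ and let $t_0$ be the tree obtained from $t$ by deleting the entire subtree $s$. The plan is to show that $t=\merge(t_0,s)$, that both $t_0$ and $s$ again satisfy the Union condition at every node, and that $\rank(t_0)\geq\rank(s)$, so the inductive hypothesis applies to both smaller trees and yields that they are Union trees, whence $t=\merge(t_0,s)$ is a Union tree.

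\textbf{The main obstacle} is the bookkeeping in this converse step, specifically verifying that $t_0$ still satisfies the Union condition at its root after removing the child $c$ of rank $k-1$, and confirming the rank arithmetic matches the definition of $\merge$. Removing $c$ deletes exactly the child of rank $k-1$; I must argue that no \emph{other} child of $r$ had rank $k-1$, which forces $\rank(t_0)=k-1$ rather than $k$, so that merging $t_0$ (of root-rank $k-1$) with $s$ (also of root-rank $k-1$) lands in the ``otherwise'' branch of the $\merge$ definition and correctly restores the root rank to $k$. This uniqueness is not immediate from the Union condition alone, since that condition only fixes the \emph{set} of children ranks, not their multiplicities; I would handle it by choosing $c$ more carefully—namely among all rank-$(k-1)$ children, or by strengthening the induction to track that after peeling, the remaining children of $r$ realize each rank in $\{0,\ldots,k-2\}$ and contribute root-rank exactly $k-1$. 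Making this peeling well-defined and checking the three merge preconditions is the delicate part; the rest is routine verification that inherited subtrees are unaffected.
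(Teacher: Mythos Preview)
The paper does not actually prove this theorem; it quotes the characterization from~\cite{DBLP:journals/ipl/Cai93} and moves on. So there is no ``paper's own proof'' to compare against, and your proposal has to stand on its own.

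Your forward direction is fine. The converse, however, has a real gap. You write that you ``must argue that no other child of $r$ had rank $k-1$'', but this is simply false: a Union tree can easily have several children of the root with rank $k-1$. For instance, merge two singletons to get a rank-$1$ tree $T_1$, do the same to get $T_2$ and $T_3$, then form $\merge(\merge(T_1,T_2),T_3)$; the resulting root has rank $2$ and two children of rank $1$. Since uniqueness fails, neither of your suggested rescues works: ``choosing $c$ more carefully among the rank-$(k-1)$ children'' cannot manufacture uniqueness, and ``strengthening the induction'' to force the remaining children ranks to be $\{0,\ldots,k-2\}$ is just restating the (false) uniqueness claim.

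The fix is not to prove uniqueness but to drop it and do a case split. Pick any child $c$ of rank $k-1$ and set $s=t|_c$. If $c$ was the \emph{only} rank-$(k-1)$ child, define $t_0$ by deleting $s$ and lowering the root's rank to $k-1$; then $t_0$ satisfies the Union condition with root rank $k-1$, and $\merge(t_0,s)$ lands in the ``otherwise'' branch, restoring root rank $k$. If there is another rank-$(k-1)$ child, leave the root rank at $k$; then $t_0$ still has children ranks $\{0,\ldots,k-1\}$, the Union condition holds, and $\merge(t_0,s)$ lands in the first branch (since $\rank(t_0)=k>k-1=\rank(s)$), again giving root rank $k$. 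In both cases $t_0$ and $s$ are strictly smaller, satisfy the Union condition at every node, and the induction goes through.
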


Note that the rank of a Union tree always coincides by its \emph{height}. (And, any subtree of a Union tree is also a Union tree.)
In particular, the leaves are exactly those nodes of rank $0$.

\section{Structural characterization of Union-Find trees}

Suppose $s$ and $t$ are trees on the same set $V$ of nodes, with the same root $\root$ and the same rank function $\rank$.
We write $s\preceq t$ if $x\preceq_s y$ implies $x\preceq_t y$ for each $x,y\in V$.

Clearly, $\preceq$ is a partial order on any set of trees (i.e. is a reflexive, transitive and antisymmetric relation).
It is also clear that $s\preceq t$ if and only if $\parent_s(x)\preceq_t x$ holds for each $x\in V-\{\root\}$
which is further equivalent to requiring $\parent_s(x)\preceq_t \parent_t(x)$ since $\parent_s(x)$ cannot be $x$.

Another notion we define is the (partial) operation $\push$ on trees as follows: when $t$ is a tree and $x\neq y\in V_t$ are \emph{siblings} in $t$,
i.e. have the same parent, and $\rank_t(x)<\rank_t(y)$, then $\push(t,x,y)$ is defined as the tree $(V_t,\root_t,\rank_t,\parent')$ with
\[\parent'(z)=\begin{cases}y&\hbox{if }z=x,\\\parent_t(z)&\hbox{otherwise,}\end{cases}\]
that is, we ``push'' the node $x$
one level deeper in the tree just below its former sibling $y$.
(See Figure~\ref{fig-merge-collapse-push}.)  

We write $t\vdash t'$ when $t'=\push(t,x,y)$ for some $x$ and $y$, and as usual,
$\vdash^*$ denotes the reflexive-transitive closure of $\vdash$.

\begin{proposition}
\label{prop-push-is-lift}
For any pair $s$ and $t$ of trees, the following conditions are equivalent:
\begin{enumerate}[label=(\roman*)]
\item $s\preceq t$,
\item there exists a sequence $t_0=s,t_1,t_2,\ldots,t_n$ of trees such that for each $i=1,\ldots,n$ we have
  $t_i=\push(t_{i-1},x,y)$ for some depth-one node $x\in\levelone{t_{i-1}}$, moreover,
  $\levelone{t_n}=\levelone{t}$ and $t_n|_x\preceq t|_x$ for each $x\in\levelone{t}$,
\item $s\vdash^* t$.
\end{enumerate}
\end{proposition}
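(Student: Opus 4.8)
The plan is to prove the three statements equivalent around the cycle $(i)\Rightarrow(ii)\Rightarrow(iii)\Rightarrow(i)$, invoking induction on $|V|$ only in the implication $(ii)\Rightarrow(iii)$. The routine half is $(iii)\Rightarrow(i)$: I would check that one $\push$ can only add ancestor pairs, so that $t\vdash t'$ implies $t\preceq t'$, and then use reflexivity and transitivity of $\preceq$ to pass to $\vdash^*$. For the single step, if $t'=\push(t,x,y)$ then every node except $x$ keeps its parent, so by the parentwise criterion recalled before the statement it suffices to note $\parent_t(x)=\parent_t(y)=\parent_{t'}(y)\preceq_{t'}y=\parent_{t'}(x)$, using that $x,y$ are siblings in $t$.

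The substance is $(i)\Rightarrow(ii)$, which I would settle by an explicit construction rather than by induction. First, $\levelone{t}\subseteq\levelone{s}$: if $\parent_t(x)=\root$ then $\parent_s(x)\preceq_t\parent_t(x)=\root$ forces $\parent_s(x)=\root$, the root having no proper ancestor. Hence the only depth-one nodes of $s$ that must be removed are the ``extra'' ones in $\levelone{s}\setminus\levelone{t}$. For such an $x$, let $y_x$ be the unique node of $\levelone{t}$ with $y_x\preceq_t x$; then $y_x\neq x$, so $y_x\prec_t x$ and therefore $\rank(y_x)>\rank(x)$ as rank strictly decreases towards the leaves. As $x$ and $y_x$ are both children of $\root$ in $s$, they are siblings there and $\push(\cdot,x,y_x)$ is a legal depth-one push. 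I would apply these pushes one extra node at a time, in any order: since distinct depth-one nodes are incomparable, pushing one of them leaves every other extra node and every $y_x$ fixed at depth one, so each push is legal when performed and the final tree $t_n$ satisfies $\levelone{t_n}=\levelone{t}$.

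It remains to establish the subtree requirement $t_n|_x\preceq t|_x$. I would first show every push in the construction preserves the relation to $t$: pushing an extra $x$ changes only $\parent(x)$ to $y_x$, and $y_x\preceq_t\parent_t(x)$ holds because $y_x\preceq_t x$ with $y_x\neq x$; hence $t_n\preceq t$. Writing $d_r(z)$ for the depth-one ancestor of a nonroot $z$ in a tree $r$, I would then argue $d_{t_n}(z)=d_t(z)$ for every $z$: from $t_n\preceq t$ we get $d_{t_n}(z)\preceq_t z$, and being a depth-one node of $t$ it must equal $d_t(z)$. Consequently $t_n$ and $t$ split $V\setminus\{\root\}$ into the same subtrees, and on this common node set the global comparison $t_n\preceq t$ restricts to $t_n|_x\preceq t|_x$ for each $x\in\levelone{t}$, which is precisely condition $(ii)$.

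Finally, $(ii)\Rightarrow(iii)$ I would prove by induction on $|V|$. The push prefix already gives $s\vdash^* t_n$; for each $x\in\levelone{t}$ the pair $t_n|_x\preceq t|_x$ lives on a strictly smaller node set, so the induction hypothesis (applying $(i)\Rightarrow(ii)\Rightarrow(iii)$ to the subtrees) yields $t_n|_x\vdash^* t|_x$. Because subtrees hanging from distinct depth-one nodes are disjoint and a push inside one subtree is also a legal push in the whole tree that does not touch the others, I can concatenate these subtree derivations into $t_n\vdash^* t$, whence $s\vdash^* t$. I expect the real work to sit in $(i)\Rightarrow(ii)$, specifically in the claim $d_{t_n}(z)=d_t(z)$ that the pushed tree and $t$ carry identical subtrees; this is what lets the global comparison descend to each subtree and thereby feeds the recursion of $(ii)\Rightarrow(iii)$.
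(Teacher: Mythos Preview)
Your proof is correct and follows essentially the same cycle $(i)\Rightarrow(ii)\Rightarrow(iii)\Rightarrow(i)$ as the paper, with the same mechanism in $(i)\Rightarrow(ii)$ of pushing the surplus depth-one nodes of $s$ under their depth-one $t$-ancestors. The only noteworthy differences are cosmetic: you induct on $|V|$ in $(ii)\Rightarrow(iii)$ where the paper inducts on $\rank(s)$, and you make explicit (via the $d_{t_n}(z)=d_t(z)$ argument) that $t_n$ and $t$ carry identical subtree node-sets, a point the paper's proof uses but leaves implicit when it writes $s|_x\preceq t|_x$ without first checking the vertex sets agree.
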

\begin{proof}
{\bf{i)$\Rightarrow$ii).}} 
It is clear that $\preceq$ is equality on singleton trees, thus $\preceq$ implies $\vdash^*$ for trees of rank $0$.
Assume $s\preceq t$ for the trees $s=(V,\root,\rank,\parent)$ and $t=(V,\root,\rank,\parent')$ and
let $X$ stand for the set $\children(s)$ of the depth-one nodes of $s$
and $Y$ stand for $\children(t)$.
Clearly, $Y\subseteq X$ since by $s\preceq t$, any node $x$ of $s$ having depth at least two has to satisfy $\parent(x)\preceq_t\parent'(x)$
and since $\parent(x)\neq\root$ for such nodes, $x$ has to have depth at least two in $t$ as well. Now there are two cases:
either $\root=\parent(x)=\parent'(x)$ for each $x\in X$, or $\parent(x)\prec_t\parent'(x)$ for some $x\in X$.

If $\parent'(x)=\root$ for each $x\in X$, then $X=Y$ and we only have to show that $s|_x\preceq t|_x$ for each $x\in X$.
For this, let $u,v\in V(s|_x)$ with $u\preceq_{s|_x}v$. Since $s|_x$ is a subtree of $s$, this holds if and only if
$x\preceq_s u\preceq_s v$. From $s\preceq t$ this implies $x\preceq_t u\preceq_t v$, that is, $u\preceq_{t|_x}v$, hence $s|_x\preceq t|_x$.

Now assume $\parent(x)\prec_t\parent'(x)$ for some $x\in X$.
Then $\parent'(x)\neq\root$, thus there exists some $y\in Y$ with $y\preceq_t\parent'(x)$. By $Y\subseteq X$, this $y$ is a member of $X$ as well,
and $\rank_s(y)=\rank_t(y)>\rank_t(x)=\rank_s(x)$, thus $s'=\push(s,x,y)$ is well-defined.
Moreover, $s'\preceq t$ since $\parent_{s'}(z)\preceq_t z$ for each $z\in V$: either $z\neq x$ in which case $\parent_{s'}(z)=\parent(z)\preceq_t z$
by $s\preceq t$, or $z=x$ and then $\parent_{s'}(z)=y\preceq_t\parent'(x)\preceq_t x=z$ also holds. Thus, there exists a tree $s'=\push(s,x,y)$
for some $x\in\levelone{s}$ with $s'\preceq t$; since $\levelone{s'}=X-\{x\}$, by repeating this construction we eventually arrive to
a tree $t_n$ with $|\levelone{t_n}|=|Y|$, implying $\levelone{t_n}=Y$ by $Y\subseteq \levelone{t_n}$.

{\bf{ii)$\Rightarrow$iii).}} We apply induction on $\rank(s)=\rank(t)$. When $\rank(s)=0$, then $s$ is a singleton tree and the condition in ii) ensures
that $t$ is a singleton tree as well. Thus, $s=t$ and clearly $s\vdash^*t$.

Now let assume the claim holds for each pair of trees of rank less than $\rank(s)$ and let $t_0,\ldots,t_n$ be trees satisfying the condition.
Then, by construction, $s\vdash^*t_n$. Since $\rank(t_n|_x)<\rank(t_n)=\rank(s)$ for each node $x\in\levelone{t_n}$, by $t_n|_x\preceq t|_x$
we get applying the induction hypothesis that $t_n|_x\vdash^*t|_x$ for each depth-one node $x$ of $t_n$, thus $t_n\vdash^* t$, hence
$s\vdash^*t$ as well.

{\bf{iii)$\Rightarrow$ i).}} For $\vdash^*$ implying $\preceq$ it suffices to show that $\vdash$ implies $\preceq$ since the latter is reflexive and transitive.
So let $s=(V,r,\rank,\parent)$ and $x\neq y\in V$ be siblings in $s$  with the common parent $z$, $\rank(x)<\rank(y)$ and let $t=\push(s,x,y)$.
Then, since $\parent_s(x)=z=\parent_t(y)=\parent_t(\parent_t(x))$, we get $\parent_s(x)\preceq_t x$, and by
$\parent_s(w)=\parent_t(w)$ for each node $w\neq x$, we have $s\preceq t$.

\qed\end{proof}

The relations $\preceq$ and $\vdash^*$ are introduced due to their intimate relation to Union-Find and Union trees
(similarly to the case of the union-by-size strategy~\cite{DBLP:journals/corr/GelleI15}, but there the 
$\push$ operation itself was slightly different):
\begin{theorem}
	\label{thm-uf-push}
	A tree $t$ is a Union-Find tree if and only if $t\vdash^* s$ for some Union tree $s$.
\end{theorem}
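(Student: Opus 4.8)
The plan is to prove both implications by induction, after first rephrasing the statement via Proposition~\ref{prop-push-is-lift}: since $t\vdash^* s$ is equivalent to $t\preceq s$, the theorem asserts that $t$ is a Union-Find tree if and only if $t\preceq s$ for some Union tree $s$. I will freely use that every Union tree is a Union-Find tree, that $\preceq$ is a partial order, and the pointwise characterization $t\preceq s\iff \parent_t(z)\preceq_s z$ for every nonroot $z$.

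For the ``only if'' direction I would induct on the construction of $t$ as a Union-Find tree. The singleton case is trivial ($s=t$). If $t=\merge(t_1,t_2)$, the induction hypothesis gives Union trees $s_1,s_2$ with $t_i\preceq s_i$; since ranks are preserved by $\preceq$ we have $\rank(s_1)\ge\rank(s_2)$, so $s:=\merge(s_1,s_2)$ is a well-defined Union tree, and a short case analysis on whether a node lies in $V_{t_1}$, equals $\root_{s_2}$, or lies strictly inside $t_2$ shows $\parent_t(z)\preceq_s z$ throughout, i.e. $t\preceq s$. If $t=\compress(t_1,x)$, the key observation is that collapsing only flattens: $\compress(t_1,x)\preceq t_1$, because its parent map sends the nonroot ancestors of $x$ to $\root$ and fixes everything else, so $\parent_t(z)\preceq_{t_1}z$ always holds. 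Combined with the induction hypothesis $t_1\preceq s_1$ and transitivity of $\preceq$, the same Union tree $s:=s_1$ works.

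The ``if'' direction is the main obstacle, and I would prove it by induction on $|V_t|$. The naive idea of obtaining $t$ from $s$ by collapses fails, because a collapse can only reattach nodes directly to the root, whereas $t\preceq s$ may require a node to hang below a proper, non-root ancestor. The remedy is to use form~(ii) of Proposition~\ref{prop-push-is-lift} (with its roles filled by our $t$ and $s$): from $t\preceq s$ I get a sequence of depth-one pushes $t=t_0\vdash\cdots\vdash t_m$ with $\levelone{t_m}=\levelone{s}=:Y$ and $t_m|_y\preceq s|_y$ for every $y\in Y$. Each subtree $s|_y$ is a Union tree on fewer nodes, so by the induction hypothesis every $t_m|_y$ is a Union-Find tree. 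Because $s$ is a Union tree, the ranks occurring in $Y$ cover $\{0,\dots,\rank(s)-1\}$, so starting from the singleton $\root$ I can merge the trees $t_m|_y$ in increasing order of rank (one of each rank $0,1,\dots,\rank(s)-1$ to raise the root from $0$ up to $\rank(s)$, then the remaining ones as extra low-rank children); this realizes $t_m$ as a Union-Find tree.

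It remains to descend from $t_m$ back to $t$, which is exactly where the push-decomposition pays off: every push moved a depth-one node $x_i$ below a depth-one sibling $y_i$, so in the intermediate tree $t_i$ the node $x_i$ sits at depth two with $y_i$ its only proper non-root ancestor. Hence each push is undone by a single collapse — one checks $\compress(t_i,x_i)=t_{i-1}$, since reattaching the ancestors $\{y_i,x_i\}$ of $x_i$ to the root leaves $y_i$ (already a child of $\root$) fixed and merely lifts $x_i$ — and applying these collapses in reverse order transforms the Union-Find tree $t_m$ into $t$. As the Union-Find class is closed under $\compress$, this shows $t$ is a Union-Find tree and closes the induction. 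I expect the bookkeeping for the merge-reassembly of $t_m$ and the verification that each reversed push is a genuine single collapse to be the only delicate points; everything else is routine once Proposition~\ref{prop-push-is-lift} is invoked.
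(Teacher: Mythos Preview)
Your proposal is correct and follows essentially the same approach as the paper's proof: structural induction for the ``only if'' direction (with the same treatment of $\merge$ and $\compress$), and for the ``if'' direction the same use of Proposition~\ref{prop-push-is-lift}(ii) to reduce to the depth-one subtrees, followed by reassembling $t_m$ via merges and undoing the depth-one pushes by collapses. The only cosmetic differences are that you phrase the forward direction through $\preceq$ and transitivity (the paper writes out the push sequence for $\compress$ explicitly), you induct on $|V_t|$ where the paper inducts on height, and you are more explicit than the paper in verifying that each depth-one push is inverted by a single $\compress$.
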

\begin{proof}
	
	Let $t$ be a Union-Find tree. We show the claim by structural induction.
	For singleton trees the claim holds since any singleton tree is a Union tree as well. 
	Suppose $t=\merge(t_1,t_2)$. Then by the induction hypothesis, $t_1\vdash^* s_1$ and
	$t_2\vdash^* s_2$ for the Union trees $s_1$ and $s_2$. Then, for the tree $s=\merge(s_1,s_2)$ we get that $t\vdash^* s$.
	Finally, assume $t=\compress(t',x)$ for some node $x$. Let $x=x_1\succ x_2\succ\ldots\succ x_k=\root_{t'}$ be the ancestral sequence of $x$ in $t'$. Then, defining $t_0=t$, $t_i=\push(t_{i-1},x_i,x_{i+1})$ we get that $t\vdash^* t_{k-2}=t'$ and $t'\vdash^* s$ for some Union tree $s$ applying the induction hypothesis, thus $t\vdash^* s$ also holds.
	
	Now assume $t\vdash^* s$ (equivalently, $t\preceq s$) for some Union tree $s$.
	We show the claim by induction on the height of $t$.
	For singleton trees the claim holds since any singleton tree is a Union-Find tree.
	
	Now assume $t=(V,\root,\rank,\parent)$ is a tree and $t\vdash^* s$ for some Union tree $s$.
	Then by Proposition~\ref{prop-push-is-lift}, there is a set $X=\children(s)\subseteq\children(t)$
	of depth-one nodes of $t$
	and a function $f:Y\to X$ with $Y=\{y_1,\ldots,y_\ell\}=\children(t)-X$  such that for the sequence $t_0=t$,
	$t_i=\push(t_{i-1},y_i,f(y_i))$ we have that $t_\ell|_x\preceq s|_x$ for each $x\in X$.
	As each $s|_x$ is a Union tree (since so is $s$), we have by the induction hypothesis that
	each $t_\ell|_x$ is a Union-Find tree. Now let $X=\{x_1,\ldots,x_k\}$ be ordered nondecreasingly by rank;
	then, as $s$ is a Union tree and $X=\children(s)$, we get that $\{\rank(x_i)\}=\{0,1,\ldots,\rank(\root)-1\}$
	by Theorem~\ref{thm-union}.
	Hence for the sequence $t'_i$ defined as
	$t'_0$ being a singleton tree with root $\root$ and for each $i\in\{1,\ldots,k\}$,
	$t'_i=\merge(t'_{i-1},t_\ell|_{x_i})$, we get that $t_\ell=t'_k$ is a Union-Find tree.
	Finally, we get $t$ from $t_\ell$ by applying successively one $\compress$ operaton on each node in $Y$,
	thus $t$ is a Union-Find tree as well.
	\qed\end{proof}

\section{Complexity}

In order to show $\NP$-completeness of the recognition problem, we first make a useful observation.

\begin{proposition}
\label{prop-charge}
In any Union-Find tree $t$ there are at least as many rank-$0$ nodes as nodes of positive rank.
\end{proposition}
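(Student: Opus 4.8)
The plan is to reduce the statement to the special case of Union trees, exploiting the fact that the $\push$ operation never alters the rank of any node. Indeed, if $t'=\push(t,x,y)$ then $t'$ and $t$ share the same node set and the same rank function $\rank_t$ (only the parent map changes), so the number of rank-$0$ nodes and the number of positive-rank nodes are invariant under a single $\push$ step, and hence under $\vdash^*$. By Theorem~\ref{thm-uf-push}, every Union-Find tree $t$ satisfies $t\vdash^* s$ for some Union tree $s$; since $t$ and $s$ then carry the same rank function, it suffices to establish the claimed inequality for the Union tree $s$.

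For the Union tree $s$ I would argue by exhibiting an injection from the positive-rank nodes into the rank-$0$ nodes. By Theorem~\ref{thm-union}, every node $x$ of $s$ satisfies the Union condition, so its children realize exactly the ranks $\{0,1,\ldots,\rank_s(x)-1\}$; in particular each node $x$ of positive rank has at least one rank-$0$ child, and I pick one such child $g(x)$. The map $g$ is injective because $g(x)$ is a child of $x$, so $\parent_s(g(x))=x$, whence $g(x_1)=g(x_2)$ forces $x_1=\parent_s(g(x_1))=\parent_s(g(x_2))=x_2$. Consequently the number of positive-rank nodes is at most the number of rank-$0$ nodes in $s$, and by the rank-invariance noted above the same inequality transfers back to $t$.

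The part requiring the most care is recognizing why this injective argument cannot be applied to $t$ directly: after a $\compress$ operation a positive-rank node may have lost all of its small-rank children, so it need not retain a rank-$0$ child, and the Union condition generally fails in a Union-Find tree. The reduction through $\push$ via Theorem~\ref{thm-uf-push}, combined with the observation that $\push$ leaves $\rank$ untouched, is precisely what circumvents this obstacle; everything else is routine. One could instead proceed by structural induction on the inductive definition of Union-Find trees, but then the $\merge$ step with two equal-rank arguments---where the shared root rank increments and a formerly rank-$0$ root changes category---needs separate bookkeeping, making the $\push$-based route the cleaner choice.
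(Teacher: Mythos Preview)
Your argument is correct. You reduce the inequality to Union trees via Theorem~\ref{thm-uf-push}, noting that every $\push$ step preserves both the node set and the rank function, and then you finish with an injection from positive-rank nodes to rank-$0$ children supplied by the Union condition of Theorem~\ref{thm-union}. This works and there is no circularity: in the paper Theorem~\ref{thm-uf-push} is established before Proposition~\ref{prop-charge} and its proof does not invoke the proposition.

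The paper, however, takes exactly the structural-induction route you sketch at the end: it argues directly on the inductive definition of Union-Find trees. The $\compress$ case is immediate (ranks are untouched), and the $\merge$ case is split into $\rank(t_1)=0$ (forcing both operands to be singletons, giving one rank-$0$ and one rank-$1$ node) versus $\rank(t_1)>0$ (where the root already has positive rank, so neither count changes). Thus the ``separate bookkeeping'' you mention is modest: the only genuinely delicate situation is the singleton--singleton merge, not arbitrary equal-rank merges. The trade-off is that the paper's proof is self-contained and does not lean on the $\push$ characterization, whereas your route is conceptually cleaner once that characterization is available and yields the pleasant injection argument as a bonus.
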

\begin{proof}
We apply induction on the structure of $t$.
The claim holds for singleton trees (having one single node of rank $0$).
Let $t=\merge(t_1,t_2)$ and suppose the claim holds for $t_1$ and $t_2$. There are two cases.
\begin{itemize}
	\item Assume $\rank(t_1)=0$. Then, since $\rank(t_1)\geq \rank(t_2)$ we have that $\rank(t_2)$ is $0$ as well, i.e. both $t_1$ and $t_2$ are
	  singleton trees (of rank $0$). In this case $t$ has one node of rank $1$ and one node of rank $0$.
	\item If $\rank(t_1)>0$, then (since $\root_{t_1}$ is the only node in $V_t=V_{t_1}\cup V_{t_2}$ whose rank can change at all, in which case it increases)
 	  neither the total number of rank-$0$ nodes nor the total number of nodes with positive rank changes, thus the claim holds.
\end{itemize}
Let $t=\compress(s,x)$ and assume the claim holds for $s$. Then, since the $\compress$ operation does not change the rank of any of the nodes,
the claim holds for $t$ as well.
\qed\end{proof}

In order to define a reduction from the strongly $\NP$-complete problem $\textsc{Partition}$ we introduce several notions on trees:

An \emph{apple} of weight $a$ for an integer $a>0$ is a tree consisting of a root node of rank $2$, a depth-one node of rank $0$ and $a$ depth-one nodes of rank $1$.

A \emph{basket} of size $H$ for an integer $H>0$ is a tree consisting of $H+4$ nodes: the root node having rank $3$,
$H+1$ depth-one children of rank $0$ and one depth-one child of rank $1$, which in turn has a child of rank $0$.

A \emph{flat tree} is a tree $t$ of the following form: the root of $t$ has rank $4$. The immediate subtrees of $t$ are:
\begin{itemize}
	\item a node of rank $0$, having no children;
	\item a node of rank $1$, having a single child of rank $0$;
	\item a node of rank $2$, having two children: a single node of rank $0$ and a node of rank $1$, having a single child of rank $0$;
	\item an arbitrary number of apples,
	\item and an arbitrary number of baskets for some fixed size $H$.
\end{itemize}
(See Figure~\ref{fig-ranktree}.)

\begin{figure}[h!]
\centering\begin{tikzpicture}[every node/.style={draw,thick,circle,inner sep=1pt}, sibling distance=5pt]
\Tree [.$4$ 
	[.$0$ ]
	[.$1$ [.$0$ ]]
	[.$2$ [.$1$ [.$0$ ] ] [.$0$ ]]
	[.$2$ [.$0$ ] [.$1$ ] [.$1$ ] [.$1$ ]]	
	[.$2$ [.$0$ ] [.$1$ ] [.$1$ ] [.$1$ ] [.$1$ ] [.$1$ ]]	
	[.$3$ [.$0$ ] [.$1$ [.$0$ ]] [.$0$ ] [.$0$ ] [.$0$ ] [.$0$ ] [.$0$ ] ] 
]
\node (doboz) [rectangle,draw,dotted, minimum width=32mm, minimum height=20mm] at (1.1,-1.5) {};	
\node[draw=none, rectangle, anchor=north] at (doboz.south) {This is an apple};
\node[draw=none, rectangle, node distance=1.6cm, below of=doboz] {of weight $5$};	
\node (doboz) [rectangle,draw,dotted, minimum width=38mm, minimum height=30mm] at (4.8,-2) {};	
\node[draw=none, rectangle, anchor=north] at (doboz.south) {This is a basket};
\node[draw=none, rectangle, node distance=2.1cm, below of=doboz] {of size $5$};	
\node (doboz) [rectangle,draw,dotted, minimum width=22mm, minimum height=20mm] at (-1.7,-1.5) {};	
\node[draw=none, rectangle, anchor=north] at (doboz.south) {This is an apple};
\node[draw=none, rectangle, node distance=1.6cm, below of=doboz] {of weight $3$};	
\node (doboz) [rectangle,draw,dotted, minimum width=22mm, minimum height=30mm] at (-4.0,-2) {};	
\node[draw=none, rectangle, anchor=north] at (doboz.south) {This part is constant};
\node[draw=none, rectangle, node distance=2.1cm, below of=doboz] {in a flat tree};

\end{tikzpicture}
\caption{A flat tree.}
\label{fig-ranktree}
\end{figure}
At this point we recall that the following problem $\textsc{Partition}$ is $\NP$-complete in the strong sense~\cite{Garey:1979:CIG:578533}:
given a list $a_1,\ldots,a_{m}$ of positive integers and a value $k>0$
such that the value $B=\frac{\sum_{i=1}^{m}a_i}{k}$ is an integer,
does there exist a partition $\mathcal{B}=\{B_1,\ldots,B_k\}$ of the set $\{1,\ldots,m\}$ satisfying
$\sum_{i\in B_j}a_i=B$ for each $1\leq j\leq k$?

(Here ``in the strong sense'' means that the problem remains $\NP$-complete even if the numbers are encoded in unary.)

\begin{proposition}
\label{prop-nphard}
	Assume $t$ is a flat tree having $k$ basket children, each having the size $H$,
	and $m$ apple children of weights $a_1,\ldots,a_m$ respectively,
	satisfying $H\cdot k=\sum_{1\leq i\leq m}a_i$.
	
	Then $t$ is a Union-Find tree if and only if the instance $(a_1,\ldots,a_m,k)$ is a positive instance of the $\textsc{Partition}$ problem.
\end{proposition}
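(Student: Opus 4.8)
The plan is to invoke Theorem~\ref{thm-uf-push}, which reduces the question to whether $t\preceq s$ (equivalently $t\vdash^* s$) for some Union tree $s$, and then to drive everything by a tight counting argument in the spirit of Proposition~\ref{prop-charge}. Since the $\push$ operation never changes the rank of a node, $t$ and any $s$ with $t\vdash^* s$ carry the same multiset of ranks; a direct count on $t$ shows that the number of rank-$0$ nodes equals the number of positive-rank nodes. Indeed the constant part is already balanced, each apple of weight $a$ creates a deficit of $a$ rank-$0$ nodes (its $a$ childless rank-$1$ nodes), each basket of size $H$ a surplus of $H$, and $\sum_i a_i=kH$ makes these cancel. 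By Theorem~\ref{thm-union} every positive-rank node of a Union tree owns at least one rank-$0$ child, and distinct positive nodes own distinct ones; hence equality of the two counts forces, in any Union tree $s$ with $t\preceq s$, a bijection assigning to each positive node \emph{exactly} one rank-$0$ child. This tightness is the engine of the whole argument. Note also $B=\frac{\sum_i a_i}{k}=H$.

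For the easy direction I would convert a partition $\{B_1,\dots,B_k\}$ into an explicit Union tree $s$ with $t\preceq s$. Send the apples indexed by $B_j$ below the $j$-th basket (legal, since an apple root has rank $2$ and a basket root rank $3$), and inside each basket hand out $H$ of its $H+1$ depth-one rank-$0$ nodes as children of the $\sum_{i\in B_j}a_i=H$ formerly childless rank-$1$ nodes of those apples. One then checks the Union condition everywhere: the main root keeps children of ranks $\{0,1,2,3\}$, each basket root of ranks $\{0,1,2\}$, each apple root of ranks $\{0,1\}$, and every rank-$1$ node acquires its missing rank-$0$ child, so $s$ is a Union tree by Theorem~\ref{thm-union}. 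Moreover $\parent_t(x)\preceq_s x$ holds for every node, since the only relocated nodes are apple roots (which stay below the main root) and donated rank-$0$ nodes (which stay below their original basket); thus $t\preceq s$, and $t$ is a Union-Find tree by Proposition~\ref{prop-push-is-lift} and Theorem~\ref{thm-uf-push}.

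The harder direction is where I expect the real work. Fix a Union tree $s$ with $t\preceq s$. First I would pin down the apples: since ranks strictly decrease towards the leaves, no node of rank $2$ lies strictly below a rank-$2$ apple root $r_i$, which together with $t\preceq s$ forces the $a_i$ rank-$1$ children of $r_i$ to stay its direct children in $s$. By the bijection, $r_i$ and each of these $a_i$ nodes owns exactly one rank-$0$ child, so $a_i+1$ rank-$0$ nodes live in $s|_{r_i}$ while the apple natively supplies only one; the remaining $a_i$ must be \emph{injected} from outside. The locality claim is the crux: an injected rank-$0$ node $z$ satisfies $\parent_t(z)\preceq_s z$ with $r_i\preceq_s z$ but $r_i\not\preceq_s\parent_t(z)$, forcing $\parent_t(z)$ to be a strict $s$-ancestor of $r_i$; such ancestors have rank $>2$, hence are the main root and — since two rank-$3$ nodes cannot be nested — at most one basket root $b_{j(i)}$. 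Applying the bijection to the main root shows its unique rank-$0$ child cannot be relocated (any replacement would need $t$-parent equal to the main root, but there is only one such node), so the main root donates nothing; therefore every apple must lie below a necessarily unique basket and draws all $a_i$ of its injected nodes from that basket alone.

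It then remains to aggregate. By the same reasoning each basket retains one of its $H+1$ depth-one rank-$0$ nodes, and it cannot export the rank-$0$ node sitting under its own rank-$1$ child, because a rank-$1$ node can be an ancestor neither of an apple root (rank $2$) nor of an apple's rank-$1$ node (rank $1$); hence basket $j$ can donate at most $H$ rank-$0$ nodes. Writing $S_j$ for the set of apples lying below basket $j$, feasibility of the injection gives $\sum_{i\in S_j}a_i\le H$ for every $j$; since the sets $S_j$ partition all apples and $\sum_i a_i=kH$, summing forces $\sum_{i\in S_j}a_i=H=B$ for each $j$. This is exactly a solution of the $\textsc{Partition}$ instance, completing the reduction. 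I expect the delicate points to be the locality/transport claim for injected rank-$0$ nodes and the verification that neither the main root nor any basket can be robbed of the single rank-$0$ child it is obliged to keep.
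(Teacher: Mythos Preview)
Your forward direction is essentially the paper's argument. The backward direction, however, takes a different route from the paper, and as written it has a genuine gap.

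The paper does not work directly with a Union tree $s$ satisfying $t\preceq s$. Instead it invokes the layered form of Proposition~\ref{prop-push-is-lift}(ii): there is a first round of pushes of depth-one nodes after which each remaining immediate subtree is itself a Union-Find tree and the root satisfies the Union condition. From this it argues (using Proposition~\ref{prop-charge}) that the apples, not being Union-Find trees, must all be pushed, and that baskets cannot be pushed; then it simply applies Proposition~\ref{prop-charge} to each resulting basket subtree as a black box: $H+2+j$ rank-$0$ nodes versus $2+j+A$ positive-rank nodes forces $A\le H$. No node-by-node tracking inside $s$ is needed.

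Your approach instead fixes a full Union tree $s$ and tries to trace where individual rank-$0$ nodes end up. The step that fails is the assertion ``$r_i\not\preceq_s\parent_t(z)$'' for an injected rank-$0$ node $z$: you give no reason for it, and it is false in general. Nothing prevents the depth-one rank-$1$ node $q_1$ (whose $t$-parent is the main root) or the basket's own rank-$1$ node $p_j$ from migrating \emph{below} the apple root $r_i$ in $s$, and then their rank-$0$ $t$-children are injected into $s|_{r_i}$ with $\parent_t(z)$ lying strictly below $r_i$, not above it. Consequently your count ``exactly $a_i+1$ rank-$0$ nodes live in $s|_{r_i}$'' and the claim that all $a_i$ injected nodes come from the basket's depth-one leaves are both unjustified; likewise the assertion that $p_j$'s rank-$0$ child ``cannot be exported'' to an apple is wrong for the same reason.

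The gap is repairable along your lines: each such migrated rank-$1$ node contributes exactly one extra positive-rank node to $s|_{r_i}$ and brings exactly one rank-$0$ $t$-child with it, so the \emph{net} deficit to be covered by the basket's depth-one rank-$0$ leaves is still $a_i$, and the inequality $\sum_{i\in S_j}a_i\le H$ survives. But this balancing argument is missing from your write-up, and without it the ``locality/transport'' step---which you yourself flag as the crux---does not go through. The paper's use of Proposition~\ref{prop-push-is-lift}(ii) together with Proposition~\ref{prop-charge} sidesteps the whole issue.
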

\begin{proof}
	(For an example, the reader is referred to Figure~\ref{fig-rank-reduction}.)
	\begin{figure}[h!]
	\centering\begin{tikzpicture}
	[every node/.style={draw,thick,circle,inner sep=1pt}, sibling distance=3pt,scale=0.6]
	\Tree [.$4$ 
	[.$0$ ]
	[.$1$ [.$0$ ]]
	[.$2$ [.$1$ [.$0$ ] ] [.$0$ ]]
	[.$2$ [.$0$ ][.$1$ ] ]	
	[.$2$ [.$0$ ][.$1$ ][.$1$ ]]	
	[.$2$ [.$0$ ][.$1$ ][.$1$ ] [.$1$ ] ]	
	[.$2$ [.$0$ ][.$1$ ][.$1$ ] [.$1$ ] [.$1$ ]]	
	[.$2$ [.$0$ ][.$1$ ] [.$1$ ][.$1$ ] [.$1$ ]]	
	[.$3$ [.$0$ ] [.$1$ [.$0$ ]] [.$0$ ] [.$0$ ] [.$0$ ] [.$0$ ] [.$0$ ]  [.$0$ ] [.$0$ ]]
	[.$3$ [.$0$ ] [.$1$ [.$0$ ]] [.$0$ ] [.$0$ ] [.$0$ ] [.$0$ ] [.$0$ ]  [.$0$ ] [.$0$ ]] 
	]
	\end{tikzpicture}
	\caption{The initial flat tree $t$ corresponding to the \textsc{Partition}
	instance $(1,2,3,4,4,k=2)$. The size of each basket is $(1+2+3+4+4)/k=7$.}
	\label{fig-rank-reduction}
\end{figure}	

	Suppose $\mathcal{I}=(a_1,\ldots,a_m,k)$ is a positive instance of the $\textsc{Partition}$ problem.
	Let $H$ stand for the target sum $\frac{\sum a_i}{k}$.
	Let $\mathcal{B}=\{\mathcal{B}_1,\ldots,\mathcal{B}_k\}$ be a solution of $\mathcal{I}$, i.e., $\sum_{i\in\mathcal{B}_j} a_i=H$ for each $j=1,\ldots,k$.
	Let $x_1,\ldots,x_k\in\levelone{t}$ be the nodes corresponding to the baskets of $t$
	and let $y_1,\ldots,y_m\in\levelone{t}$ be the nodes corresponding to the apples of $t$. 
	
	We define the following sequence $t_0,t_1,\ldots,t_m$ of trees: $t_0=t$ and for each $i=1,\ldots,m$, let $t_i=\push(t_{i-1},y_i,x_j)$ with 
	$1\leq j\leq k$ being the unique index with $i\in\mathcal{B}_j$.
	Then, $\levelone{t_m}$ consists of $x_1,\ldots,x_k$ and the three additional nodes having rank $0$, $1$ and $2$.
	Note that the subtrees rooted at the latter three nodes are Union trees.
	Thus, if each of the trees $t_m|_{x_j}$ is a Union-Find tree, then so is $t$.
	
	Consider a subtree $t'=t_m|_{x_j}$.
	By construction, $t'$ is a tree whose root has rank $3$	and has
	\begin{itemize}
		\item $H+1$ children of rank $0$,
		\item a single child of rank $1$, having a child of rank $0$,
		\item and several (say, $\ell$) apple children with total weight $H$.
	\end{itemize}
	We give a method to transform $t'$ into a Union tree.
	First, we push $a_i$ rank-$0$ nodes to each apple child of weight $a_i$.
	After this stage $t'$ has one child of rank $0$, one child of rank $1$ and $\ell$ ``filled'' apple children, having a root of rank $2$, thus the
	root of the transformed $t'$ satisfies the Union condition. We only have to show that each of these ``filled'' apples is a Union-Find tree.
	
	Such a subtree has a root node of rank $2$, $a_i$ depth-one nodes of rank $1$ and $a_i+1$ depth-one nodes of rank $0$. Then,
	one can push into each node of rank $1$ a node of rank $0$ and arrive to a tree with one depth-one node of rank $0$,
	and $a_i$ depth-one nodes of rank $1$, each having a single child of rank $0$, which is indeed a Union tree, showing the claim by Theorem~\ref{thm-uf-push}.
	
	For an illustration of the construction the reader is referred to Figure~\ref{fig-rank-reduction-method}.

	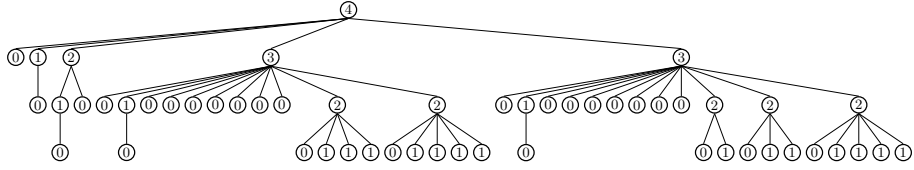
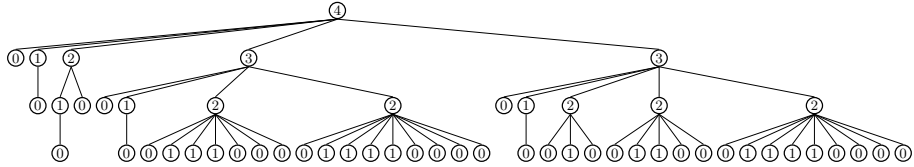
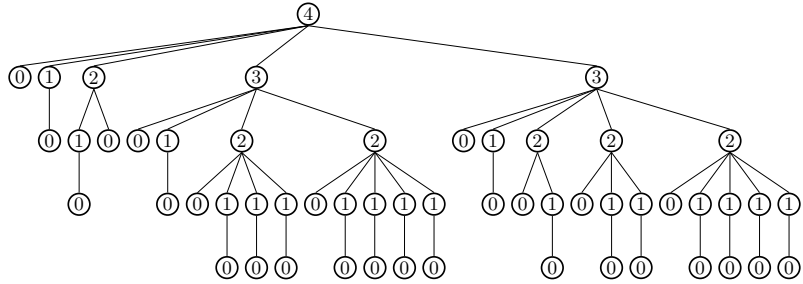
\begin{figure}[h!]
	\begin{subfigure}[c]{\textwidth}
	\centering\begin{tikzpicture}
	[every node/.style={draw,thick,circle,inner sep=1pt}, sibling distance=3pt,scale=0.6]
	\Tree [.$4$ 
	[.$0$ ]
	[.$1$ [.$0$ ]]
	[.$2$ [.$1$ [.$0$ ] ] [.$0$ ]]
	[.$3$ [.$0$ ] [.$1$ [.$0$ ]] [.$0$ ] [.$0$ ] [.$0$ ] [.$0$ ] [.$0$ ]  [.$0$ ] [.$0$ ]
	[.$2$ [.$0$ ][.$1$ ][.$1$ ] [.$1$ ] ]	
	[.$2$ [.$0$ ][.$1$ ] [.$1$ ][.$1$ ] [.$1$ ]]	
	]
	[.$3$ [.$0$ ] [.$1$ [.$0$ ]] [.$0$ ] [.$0$ ] [.$0$ ] [.$0$ ] [.$0$ ]  [.$0$ ] [.$0$ ]
	[.$2$ [.$0$ ][.$1$ ] ]	
	[.$2$ [.$0$ ][.$1$ ][.$1$ ]]	
	[.$2$ [.$0$ ][.$1$ ][.$1$ ] [.$1$ ] [.$1$ ]]	
	] 
	]
	\end{tikzpicture}
	\caption{Apples of size $3$ and $4$ are pushed into the first basket,
		apples of size $1$, $2$ and $4$ are pushed into the second basket.}
	\end{subfigure}
\begin{subfigure}[c]{\textwidth}
	\centering\begin{tikzpicture}
	[every node/.style={draw,thick,circle,inner sep=1pt}, sibling distance=3pt,scale=0.6]
	\Tree [.$4$ 
	[.$0$ ]
	[.$1$ [.$0$ ]]
	[.$2$ [.$1$ [.$0$ ] ] [.$0$ ]]
	[.$3$ [.$0$ ] [.$1$ [.$0$ ]] 
	[.$2$ [.$0$ ][.$1$ ][.$1$ ] [.$1$ ][.$0$ ] [.$0$ ] [.$0$ ] ]	
	[.$2$ [.$0$ ][.$1$ ] [.$1$ ][.$1$ ] [.$1$ ] [.$0$ ] [.$0$ ]  [.$0$ ] [.$0$ ]]	
	]
	[.$3$ [.$0$ ] [.$1$ [.$0$ ]]
	[.$2$ [.$0$ ][.$1$ ] [.$0$ ]]	
	[.$2$ [.$0$ ][.$1$ ][.$1$ ] [.$0$ ] [.$0$ ] ]	
	[.$2$ [.$0$ ][.$1$ ][.$1$ ] [.$1$ ] [.$1$ ] [.$0$ ] [.$0$ ]  [.$0$ ] [.$0$ ]]	
	] 
	]
	\end{tikzpicture}
	\caption{The apples get filled from the baskets' surplus rank-$0$ leaves.}
\end{subfigure}
\begin{subfigure}[c]{\textwidth}
	\centering\begin{tikzpicture}
	[every node/.style={draw,thick,circle,inner sep=1pt}, sibling distance=3pt,scale=0.8]
	\Tree [.$4$ 
	[.$0$ ]
	[.$1$ [.$0$ ]]
	[.$2$ [.$1$ [.$0$ ] ] [.$0$ ]]
	[.$3$ [.$0$ ] [.$1$ [.$0$ ]] 
	[.$2$ [.$0$ ][.$1$ [.$0$ ]][.$1$ [.$0$ ]] [.$1$ [.$0$ ]]]	
	[.$2$ [.$0$ ][.$1$ [.$0$ ]][.$1$ [.$0$ ]][.$1$ [.$0$ ]][.$1$ [.$0$ ]]]	
	]
	[.$3$ [.$0$ ] [.$1$ [.$0$ ]]
	[.$2$ [.$0$ ][.$1$ [.$0$ ]]]	
	[.$2$ [.$0$ ][.$1$ [.$0$ ]][.$1$ [.$0$ ]]]	
	[.$2$ [.$0$ ][.$1$ [.$0$ ]][.$1$ [.$0$ ]][.$1$ [.$0$ ]][.$1$ [.$0$ ]]]	
	] 
	]
	\end{tikzpicture}
	\caption{The filling of the apples is pushed a level deeper and we have a Union tree.}
\end{subfigure}
	\caption{Pushing $t$ of Figure~\ref{fig-rank-reduction} into a Union tree
	according to the solution $3+4=1+2+4$ of the \textsc{Partition} instance.}
		\label{fig-rank-reduction-method}
	\end{figure}
	
	For the other direction, suppose $t$ is a Union-Find tree. By Theorem~\ref{thm-uf-push} and Proposition~\ref{prop-push-is-lift},
	there is a subset $X\subseteq\levelone{t}$ and a mapping $f:Y\to X$ with $Y=\{y_1,\ldots,y_\ell\}=\levelone{t}-X$
	such that for the sequence $t_0=t$, $t_i=\push(t_{i-1},y_i,f(y_i))$ we have that each immediate subtree of $t_\ell$ is
	a Union-Find tree and moreover, the root of $t_\ell$ satisfies the Union condition.
	
	The root of $t$ has rank $4$, $t_\ell$ has to have at least one child having rank $0$, $1$, $2$ and $3$ respectively.
	Since $t$ has exactly one child with rank $0$ and rank $1$, these nodes has to be in $X$.
	This implies that no node gets pushed into the apples at this stage (because the apples have rank $2$).
	Thus, since the apples are \emph{not} Union-Find trees (as they have strictly less rank-$0$ nodes than positive-rank
	nodes, cf. Proposition~\ref{prop-charge}), all the apples have to be in $Y$.
	Apart from the apples, $t$ has exactly one depth-one node of rank $2$ (which happens to be a root of a Union tree),
	thus this node has to stay in $X$ as well. Moreover, we cannot push the baskets as they have the maximal rank $3$,
	hence they cannot be pushed.
	
	Thus, we have to push all the apples, and we can push apples only into baskets (as exactly the baskets
	have rank greater than $2$).
	Let $x\in X$ be a basket node, let $t'$ stand for $t_\ell|_x$ and let $\{y_1',\ldots,y_j'\}\subseteq Y$ be the set
	of those apples that get pushed into $x$ during the operation.
	Then, the total number of nodes having rank $0$ in $t'$ is $H+2+j$
	($j$ of them coming from the apples and the other ones coming from the basket) while the total number
	of nodes having a positive rank is $2+j+A$ where $A$ is the total weight of the apples in $\{y_1',\ldots,y_j'\}$.
	Applying Proposition~\ref{prop-charge} we get that $A\leq H$ for each basket.
	Since the total weight of all apples is $H\cdot k$ and each apple gets pushed into exactly one basket,
	we get that $A=H$ actually holds for each basket.
	Thus, $\mathcal{I}$ is a positive instance of the $\textsc{Partition}$ problem.		
\qed\end{proof}
\begin{theorem}
	The recognition problem of Union-Find trees is $\NP$-complete.
\end{theorem}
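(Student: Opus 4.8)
The plan is to prove the two directions of $\NP$-completeness separately: containment in $\NP$, and $\NP$-hardness, where essentially all of the work for the latter is already carried out in Proposition~\ref{prop-nphard}.

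First I would settle membership in $\NP$ using Theorem~\ref{thm-uf-push} together with Proposition~\ref{prop-push-is-lift}, which tell us that a tree $t$ is a Union-Find tree exactly when there is a Union tree $s$ with $t\preceq s$. Because $\preceq$ is defined only between trees sharing their node set, root and rank function, such a witness $s$ is nothing more than an alternative parent function on the same nodes, and therefore has size linear in that of $t$ and constitutes a polynomial-size certificate. A verifier then performs two polynomial-time checks: it confirms that $s$ is a Union tree, which by Theorem~\ref{thm-union} means testing, for every node $x$, that the set of ranks of its children equals $\{0,1,\ldots,\rank_s(x)-1\}$; and it confirms that $t\preceq s$, which by the equivalent reformulation of $\preceq$ noted after its definition amounts to checking $\parent_t(x)\preceq_s x$ for every nonroot $x$, a family of ancestor queries in $s$ computable in polynomial time. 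Hence the problem lies in $\NP$.

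For $\NP$-hardness I would reduce from $\textsc{Partition}$. Given an instance $(a_1,\ldots,a_m,k)$ with $\frac{\sum_i a_i}{k}$ an integer $H$, I build the flat tree $t$ whose basket children are exactly $k$ baskets of size $H$ and whose apple children are exactly $m$ apples of weights $a_1,\ldots,a_m$; this satisfies $H\cdot k=\sum_i a_i$, so it is precisely the situation addressed by Proposition~\ref{prop-nphard}. That proposition states that $t$ is a Union-Find tree if and only if $(a_1,\ldots,a_m,k)$ is a positive instance of $\textsc{Partition}$, which is exactly the correctness of the reduction.

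The only genuinely delicate point — and the reason the \emph{strong} $\NP$-completeness of $\textsc{Partition}$ is invoked — is the running time of the reduction. The constructed tree $t$ has $\Theta\bigl(\sum_i a_i\bigr)$ nodes, since each apple of weight $a_i$ contributes $a_i+2$ nodes, each basket contributes $H+4$ nodes, and $kH=\sum_i a_i$. Thus the output size is polynomial in the \emph{unary} encoding of the input but not in its binary encoding. Since $\textsc{Partition}$ remains $\NP$-complete even when the $a_i$ are given in unary, this dependence is polynomial in the actual input length, and the flat tree can be written down in polynomial time. Combining this polynomial-time many-one reduction with the membership argument above establishes $\NP$-completeness.
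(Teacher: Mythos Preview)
Your proposal is correct; both directions are sound. The $\NP$-hardness part coincides with the paper's: both simply invoke Proposition~\ref{prop-nphard}, and your explicit discussion of why \emph{strong} $\NP$-completeness of $\textsc{Partition}$ is needed (because the flat tree has $\Theta(\sum_i a_i)$ nodes) is spelled out in more detail than in the paper.

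For membership in $\NP$, your route differs slightly from the paper's. You take the target Union tree $s$ itself as the certificate and verify $t\preceq s$ directly via the characterization $\parent_t(x)\preceq_s x$; this is clean and uses only that $s$ lives on the same node set, so it is linear-size. The paper instead keeps the push sequence $t=t_0\vdash t_1\vdash\cdots\vdash t_k$ as the certificate and argues that $k\leq n^2$ because each push strictly increases the sum of depths, which is bounded by $n^2$. Both arguments are valid; yours is more direct, while the paper's yields as a by-product an explicit polynomial bound on the length of any push sequence.
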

\begin{proof}
	By Proposition~\ref{prop-nphard} we get $\NP$-hardness. For membership in $\NP$, 
	we make use of the characterization given in Theorem~\ref{thm-uf-push} and that
	the possible number of pushes is bounded above by $n^2$:
	upon pushing $x$ below $y$, the depth of $x$ and its descendants increases,
	while the depth of the other nodes remains the same.
	Since the depth of any node is at most $n$, the
	sum of the depths of all the nodes is at most $n^2$ in any tree.
	Hence, it suffices to guess nondeterministically a sequence $t=t_0\vdash t_1\vdash \ldots \vdash t_k$
	for some $k\leq n^2$ with $t_k$ being a Union tree (which also can be checked in polynomial time).
\qed\end{proof}
	
\section{Conclusion, future directions}
We have shown that unless $\mathbf{P}=\NP$, there is no efficient algorithm to check whether a given tree is a valid Union-Find tree,
assuming union-by-rank strategy, since the problem is $\NP$-complete, complementing our earlier results
assuming union-by-size strategy. A very natural question is the following: does there exist a merging strategy
under which the time complexity remains amortized almost-constant, and at the same time allows an efficient
recognition algorithm? Although this data structure is called ``primitive'' in the sense that it does not
really need an automatic run-time certifying system, but we find the question to be also interesting
from the mathematical point of view as well.
It would be also an interesting question whether the recognition problem of Union-Find trees built up
according to the union-by-rank strategy is still $\NP$-complete if the nodes of the tree are not tagged
with the rank, that is, given a tree without rank info, does there exist a Union-Find tree with the same
underlying tree?

\bibliography{biblio}{}
\bibliographystyle{plain}

\end{document}